\journal{}
\newtheorem{proposition}{Proposition}
\newtheorem{theorem}[proposition]{Theorem}
\newtheorem{example}{Example}
\newtheorem{definition}{Definition}
\def\F{{\mathbb{F}}}
\def\Z{{\mathbb{Z}}}
\def\0{{\mathbf{0}}}
\def\1{{\mathbf{1}}}
\DeclarePairedDelimiter{\floor}{\lfloor}{\rfloor}
\newcommand\dsb[1]{\llbracket #1 \rrbracket}
\begin{document}
	
\begin{frontmatter}
\title{Two New Zero-Dimensional Qubit Codes from Bordered Metacirculant Construction}
	
\author[tamu]{Padmapani Seneviratne}
\ead{Padmapani.Seneviratne@tamuc.edu}
	
\cortext[cor1]{Corresponding author}	
\author[ntu]{Martianus Frederic Ezerman\corref{cor1}}
\ead{fredezerman@ntu.edu.sg}
	
\address[tamu]{Department of Mathematics, Texas A\&M University-Commerce,\\
2600 South Neal Street, Commerce TX 75428, U.S.A.}
	
\address[ntu]{School of Physical and Mathematical Sciences, Nanyang Technological University,\\
21 Nanyang Link, Singapore 637371.}
	
\begin{abstract}
We construct qubit stabilizer codes with parameters $\llbracket 81, 0, 20 \rrbracket_2$ and $\llbracket 94, 0, 22 \rrbracket_2$ for the first time. We use symplectic self-dual additive codes over $\F_4$ built by modifying the adjacency matrices of suitable metacirculant graphs found by a randomized search procedure.
\end{abstract}
	
\begin{keyword}
additive codes \sep graph codes \sep quantum codes \sep metacirculant graph \sep self-dual codes
\end{keyword}
\end{frontmatter}


\section{Introduction}\label{sec:intro}
Let $\omega$ be a root of $x^2+x+1 \in \F_2[x]$. The finite field of $4$ element is $\mathbb{F}_4:=\{0,1,\omega,\omega^2=\overline{\omega}\}$. A code $C$ of length $n$ over $\mathbb{F}_4$ is {\it additive} if the sum of any two, not necessarily distinct, codewords $\mathbf{c},\mathbf{w} \in C$ is again a codeword in $C$. The {\it weight} of a vector in $\mathbb{F}_4^n$ is the number of its nonzero entries. The {\it distance} of vectors $\mathbf{a}, \mathbf{b} \in \mathbb{F}_4^n$ is the number of positions where their respective entries differ. The {\it minimum distance} of an additive code $C$ is the smallest weight $d$ among the weights of its nonzero codewords. The code is Type II if each of its codewords has an even weight. Otherwise, $C$ is Type I. If $|C|=2^k$, then it has the parameters $(n,2^k,d)_4$. 

The {\it trace Hermitian inner product} of $\mathbf{a},\mathbf{b} \in \F_4^n$ is 
\begin{equation}
\mathbf{a} * \mathbf{b} = \sum_{j=1}^n \left( a_j \, b_j^2 + a_j^2 \, b_j\right).
\end{equation}
The code $C^* = \{ \mathbf{w} \in \F_4^n : \mathbf{w} * \mathbf{c} =0 \mbox{ for all } \mathbf{c} \in C\}$ is the {\it symplectic dual} of $C$ and $C$ is {\it self-dual} if $C = C^*$. A self-dual additive code $C$ over $\mathbb{F}_4$ directly leads to a quantum bit (qubit) {\it stabilizer code} $Q:=Q(C)$ with parameters $\dsb{n,0,d}_2$. This link is so well-established that we simply refer the readers to the exposition of Calderbank, Rains, Shor, and Sloane in~\cite{Calderbank1998} for the details. The zero dimensional qubit code $Q$ represents a \emph{highly entangled single quantum state}, useful when high entanglement and error-correction rate are simultaneously required. 

Let $G$ be a simple undirected graph with $n$ vertices and {\it adjacency matrix} $\Gamma:=\Gamma(G)$. A {\it graph code} on $G$ is an $(n,2^n,d)_4$ additive code $C:=C(G)$ whose codewords are the $\mathbb{F}_2$-linear combinations of the rows of $\Gamma + \omega \, I_n$. Any graph code is clearly self-dual. 
Various versions and proofs of the following main result, first discussed by Schlingemann in~\cite{Schlingemann2002}, are nicely presented by Danielsen and Parker in \cite[Section~3]{Danielsen2006}. Every graph represents a self-dual additive code over $\mathbb{F}_4$ and every self-dual additive code over $\mathbb{F}_4$ can be represented by a graph. Thus, to construct $\dsb{n,0,d}_2$ stabilizer codes, it suffices to find graph codes with minimum distance $d$. Two graph codes are {\it equivalent} if their corresponding graphs are isomorphic \cite[Thm.~6]{Danielsen2006}.

A complete classification of \emph{all} self-dual additive codes over $\mathbb{F}_4$ for $n \leq 12$ was done in~\cite{Danielsen2006}. Follow-up works, covering $n \leq 50$, were done by Gulliver and Kim in~\cite{GK2004}, by Varbanov in~\cite{Varbanov2008}, by Grassl and Harada in~\cite{Grassl2017}, and by Saito in~\cite{Saito2019}. Their efforts were limited mostly to graphs whose adjacency matrices are {\it circulant} or {\it bordered circulant}. Excellent $d$ has been observed to come from strongly regular graphs with large symmetry groups. As $n$ grows, the challenge to search for, much less classify, codes with optimal $d$ escalates since computing the exact minimum distance is hard~\cite{Vardy1997}. 

In \cite{SE2020} we restricted the graphs to the {\it metacirculant} family. 
In the present work, the new $\dsb{81, 0, 20}_2$ and $\dsb{94,0,24}_2$ codes come from {\it bordered metacirculant graphs}. For $n \in \{29,37\}$ we obtain bordered metacirculant graph codes with strictly better minimum distances than the best bordered circulant ones.

\section{Bordered Metacirculant Graphs}\label{sec:borderedmeta}

\begin{definition} (\cite{Alspach1982})\label{def:mc}
Let $m,\ell \in \mathbb{N}$ be fixed and $\alpha\in \Z_{\ell}$ be a unit. Let $S_{0}, S_{1}, \ldots, S_{\floor{m/2}} \subseteq \Z_{\ell}$ satisfy the four properties
\begin{enumerate}
	\item $S_{0} = -S_{0}$,
	\item $0\notin S_{0}$,
	\item $\alpha^{m}S_k = S_k$ for $1\le k\le \lfloor m/2 \rfloor$,
	\item if $m$ is even, then $\alpha^{m/2}S_{m/2} = -S_{m/2}$. 
\end{enumerate}
The {\it metacirculant} graph $G:=G\left(m, \ell, \alpha, S_{0}, S_{1}, \ldots, S_{\floor{m/2}}\right)$ has as its vertex set $V(G):=\Z_{m}\times \Z_{\ell}$ of size $n = m \cdot \ell$. 
Let $V_0, V_1, \ldots, V_{m-1}$, where 
$V_{i} = \{(i, j) : 0 \le j < \ell \}$, be a partition of $V(G)$. Let $1 \le k \le \floor{m/2}$. Vertices $(i, j)$ and $(i+k, h)$ are adjacent if and only if $(h-j) \in \alpha^{i} \, S_k$.
\end{definition}

The adjacency matrix $\Gamma:=\Gamma(G)$ of a metacirculant graph $G$ is a \textit{metacirculant matrix}. A \textit{bordered metacirculant matrix}  \begin{equation*}
\overline{\Gamma} = 
\begin{bmatrix}
0 & 1 & \cdots & 1\\
1 &   &        &  \\
\vdots &  & \Gamma &   \\
1 &    &      &   \\
\end{bmatrix},
\end{equation*}
is the adjacency matrix of the graph $\overline{G}$, obtained from $G$ by adding a new vertex $v_{\infty}$ and joining $v_{\infty}$ to each vertex of $G$.

\begin{figure}[h!]
\centering
\begin{tikzpicture}[multilayer=3d]
		\SetLayerDistance{-1.0}
		\Plane[x=-.4,y=-.4,width=2.6,height=1.8,color=gray,layer=2,NoBorder]
		\Plane[x=-.4,y=-.4,width=2.6,height=1.8,NoBorder]
		\Vertex[IdAsLabel,layer=1]{1}
		\Vertex[x=0.4,y=1.0,IdAsLabel,layer=1]{2}
		\Vertex[x=1.7,y=0.5,IdAsLabel,layer=1]{3}
		\Vertex[IdAsLabel,color=gray,layer=2]{4}
		\Vertex[x=0.4,y=1.0,IdAsLabel,style=dashed,color=gray,layer=2]{5}
		\Vertex[x=1.7,y=0.5,IdAsLabel,color=gray,layer=2]{6}
		\Edge(1)(2)
		\Edge(1)(3)
		\Edge(2)(3)
		\Edge[style=dashed](4)(5)
		\Edge[style=dashed](6)(5)
		\Edge(4)(6)
		\Edge[color=blue](1)(4)
		\Edge[style=dashed,color=blue](2)(5)
		\Edge[color=blue](3)(6)
\end{tikzpicture}
\caption{$G_{6}:=G(2, 3, 1, \{1, 2\}, \{0\})$ that yields the {\tt hexacode}.}
\label{fig:6}
\end{figure}
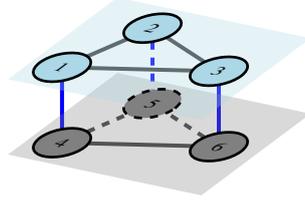
\begin{example}
The unique $\dsb{6,0,4}_2$ code is from the $(6,2^6,4)_4$ {\tt hexacode}, which can be derived from the metacirculant graph $G_{6}:=G(2, 3, 1, \{1, 2\}, \{0\})$. Figure~\ref{fig:6} labels vertices in $V_{0}=\{(0,0),(0,1),(0,2)\}$ and $V_{1}=\{(1,0),(1,1),(1,2)\}$ as $1,2,\ldots,6$. The adjacency matrix $\overline{\Gamma}_6$ of $\overline{G}_6$ and the generator matrix $\overline{\Gamma}_{6} + \omega \, I_{7}$ of the self-dual $(7,2^7,3)_4$ code are, respectively,
\[
\begin{bmatrix}
0 & 1 & 1 & 1 & 1 & 1 & 1 \\
1 & 0 & 1 & 1 & 1 & 0 & 0 \\
1 & 1 & 0 & 1 & 0 & 1 & 0 \\
1 & 1 & 1 & 0 & 0 & 0 & 1 \\
1 & 1 & 0 & 0 & 0 & 1 & 1 \\
1 & 0 & 1 & 0 & 1 & 0 & 1 \\
1 & 0 & 0 & 1 & 1 & 1 & 0 
\end{bmatrix}
\mbox{ and } 
\begin{bmatrix}
\omega & 1 & 1 & 1 & 1 & 1 & 1 \\
1 & \omega & 1 & 1 & 1 & 0 & 0 \\
1 & 1 & \omega & 1 & 0 & 1 & 0 \\
1 & 1 & 1 & \omega & 0 & 0 & 1 \\
1 & 1 & 0 & 0 & \omega & 1 & 1 \\
1 & 0 & 1 & 0 & 1 & \omega & 1 \\
1 & 0 & 0 & 1 & 1 & 1 & \omega 
\end{bmatrix}.
\]
\end{example}

The highest $d$ of an additive self-dual $(29,2^{29},d)_4$ codes from the bordered circulant graph construction is $d=9$, as shown in~\cite{Saito2019}. Using the bordered metacirculant graph construction, we obtain a self-dual $(29,2^{29},10)_4$ code $C_{29}$. The strict improvement in $d$ in Proposition~\ref{prop29} does not carry over to the quantum setup. The code $C_{29}$ gives a $\dsb{29,0,10}_2$ stabilizer code. A better $\dsb{29,0,11}_2$ code from a circulant graph code was recorded in~\cite[Table II]{GK2004}. Its corresponding $(29,2^{29},11)_4$ circulant code is unique, up to equivalence~\cite{Varbanov2008}.

\begin{proposition}\label{prop29}
The bordered metacirculant graph 
$\overline{G}_{28}$ with 
\[
G_{28}:=G(2,14,13, \{5, 6, 8, 9\}, \{0, 1, 3, 6, 7, 9, 11\})
\] 
in Table~\ref{graph:28} generates a self-dual $(29,2^{29},10)_4$ code $C_{29}$.
\end{proposition}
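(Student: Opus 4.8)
The plan is to verify the claim in three stages: (i) confirm that the stated set system $S_0 = \{5,6,8,9\}$, $S_1 = \{0,1,3,6,7,9,11\}$ together with $m=2$, $\ell=14$, $\alpha = 13$ actually satisfies the four conditions of Definition~\ref{def:mc}, so that $G_{28}$ is a genuine metacirculant graph on $n = 28$ vertices; (ii) write down the $28 \times 28$ adjacency matrix $\Gamma_{28}$ explicitly from the partition $V_0, V_1$ and the adjacency rule, border it to get the $29 \times 29$ matrix $\overline{\Gamma}_{28}$, and hence the generator matrix $\overline{\Gamma}_{28} + \omega\, I_{29}$ of the graph code $C_{29}$; (iii) establish that $C_{29}$ is self-dual (which is automatic, since \emph{any} graph code is self-dual, as noted after Definition~\ref{def:mc}) and that its minimum distance is exactly $10$.

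For stage (i): condition~1 asks $S_0 = -S_0$ in $\Z_{14}$, i.e. $\{5,6,8,9\} = \{-5,-6,-8,-9\} = \{9,8,6,5\}$, which holds; condition~2 asks $0 \notin S_0$, clear; since $m=2$, condition~3 requires $\alpha^2 S_1 = S_1$, and $\alpha = 13 \equiv -1 \Mod{14}$ gives $\alpha^2 = 1$, so this is trivially true, and in fact $\alpha^i S_k$ in the adjacency rule only distinguishes $i$ even versus odd; condition~4 (the $m$ even case) requires $\alpha^{m/2} S_{m/2} = -S_{m/2}$, i.e. $\alpha\, S_1 = -S_1$, that is $-S_1 = -\{0,1,3,6,7,9,11\} = \{0,13,11,8,7,5,3\} = S_1$ as sets, which one checks directly. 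So $G_{28}$ is well defined.

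For stage (ii)--(iii), the genuine work is the minimum-distance computation, and that is the main obstacle: computing the minimum weight of an additive code of size $2^{29}$ is, in general, hard~\cite{Vardy1997}, so a brute-force codeword enumeration over all $2^{29}$ combinations is the fallback but is already somewhat large. I would instead (a) build $\overline{\Gamma}_{28} + \omega\, I_{29}$ over $\F_4$ from Table~\ref{graph:28}, (b) feed it to a computer-algebra system (e.g. \texttt{Magma}'s additive-code machinery or an equivalent) that can compute the minimum distance of an $\F_4$-additive code, and (c) certify the two halves of the equality $d = 10$ separately: the upper bound $d \le 10$ by exhibiting a single codeword of weight $10$ (a short explicit $\F_2$-combination of rows, which can be printed and checked by hand), and the lower bound $d \ge 10$ by the exhaustive search confirming no nonzero codeword has weight $\le 9$. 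Self-duality needs no separate argument beyond the remark that graph codes are self-dual; alternatively one checks $\left(\overline{\Gamma}_{28} + \omega I\right) * \left(\overline{\Gamma}_{28} + \omega I\right)^{\mathsf{T}} = 0$ directly, using that $\overline{\Gamma}_{28}$ is symmetric with zero diagonal. The presentation in the paper will presumably just report the verified parameters together with the generator matrix, with the minimum-distance certification delegated to the accompanying computation.
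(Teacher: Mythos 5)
Your overall plan matches what the paper implicitly does: Proposition~\ref{prop29} is stated without a displayed proof, and the minimum distance is certified by computer (the paper reports using {\tt MAGMA} for the longer computations in Section~\ref{sec:qubit}, and the same machinery applies here). So stages~(ii) and~(iii) of your proposal are in the right spirit, and noting that self-duality of $C(\overline G_{28})$ is automatic for a graph code is exactly the remark the paper relies on.

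However, your verification of Definition~\ref{def:mc}, condition~4, contains a genuine error. With $m = 2$ and $\alpha = 13 \equiv -1 \pmod{14}$, the condition $\alpha^{m/2} S_{m/2} = -S_{m/2}$ reads $\alpha S_1 = -S_1$, and since multiplication by $\alpha$ \emph{is} negation in $\Z_{14}$, this is the tautology $-S_1 = -S_1$; there is nothing left to check. You instead asserted ``$-S_1 = \{0,13,11,8,7,5,3\} = S_1$ as sets, which one checks directly,'' but that final equality is \emph{false}: $-S_1 = \{0,3,5,7,8,11,13\}$ while $S_1 = \{0,1,3,6,7,9,11\}$, and for instance $1 \in S_1 \setminus (-S_1)$. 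Nothing in the definition requires $-S_1 = S_1$, so the claim you wrote would fail if one ``checked directly'' as you suggest, even though the conclusion that $G_{28}$ is well defined still stands. Replace the spurious set comparison by the observation that $\alpha \equiv -1$ makes condition~4 vacuous, and the rest of the proposal is sound.
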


\begin{theorem}\label{thm:even}
Let $C$ be an additive symplectic self-dual code over $\F_4$ generated by 
a bordered metacirculant $\overline {G}\left(m,\ell,\alpha, S_{0}, S_{1}, \ldots, S_{\floor{m/2}}\right)$. Let $\Delta_S :=|S_0| + 1$ if $m$ is odd and 
$\Delta_S :=|S_0| + |S_{\floor{m/2}}| + 1$ if $m$ is even. 
Then $C$ is Type II if and only if both $\Delta_S$ and $n = m \cdot \ell$ are odd.
\end{theorem}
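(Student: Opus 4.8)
The plan is to characterize Type II by a parity computation on row weights, since $C$ is Type II exactly when every codeword has even weight, and by linearity (the code is additive, hence $\F_2$-linear after identifying $\F_4$ with $\F_2^2$) it suffices that every generator row of $\overline{\Gamma} + \omega I_{n+1}$ have even weight \emph{and} that the weight parity be additive across sums of rows. The cleanest route is: (i) show each generator row has even weight, and (ii) show that pairwise "weight overlaps" of generator rows are even, so that by inclusion–exclusion the weight of any $\F_2$-sum of rows stays even. For self-dual additive codes over $\F_4$ the trace-Hermitian condition $\br_i * \br_j = 0$ already controls these overlaps; in fact the relevant fact (see \cite{Danielsen2006}) is that $C$ is Type II iff all generator rows have even weight, because self-duality forces the cross terms to behave. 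So the crux reduces to step (i): compute the weight of each row of $\overline{\Gamma} + \omega I_{n+1}$ modulo $2$.

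First I would split the generator rows into three groups. The top row is $(\omega, 1, 1, \ldots, 1)$, which has weight $n+1$; this is even iff $n$ is odd. Next, the rows indexed by vertices of $G$: row $v$ of $\overline{\Gamma} + \omega I_{n+1}$ consists of a leading $1$ (the border entry), the diagonal $\omega$ in position $v$, and the entries of row $v$ of $\Gamma$ elsewhere. So its weight is $1 + 1 + \deg_G(v) = \deg_G(v) + 2$, hence its parity equals the parity of $\deg_G(v)$. Now I invoke the metacirculant structure from Definition~\ref{def:mc}: for a vertex in block $V_i$, the neighbors come from $\alpha^i S_0$ within the block (contributing $|S_0|$, since $|\alpha^i S_0| = |S_0|$ as $\alpha$ is a unit), from $\alpha^{i'} S_k$ for $1 \le k \le \floor{m/2}$, $k \ne m/2$, each contributing $|S_k|$ to exactly two blocks' worth of adjacencies counted once, and, when $m$ is even, from $S_{m/2}$ contributing $|S_{m/2}|$. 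Carefully bookkeeping which of the $S_k$ with $1 \le k < m/2$ each appear an \emph{even} number of times in $\deg_G(v)$ (they pair up as "forward" and "backward" connections between distinct blocks), one finds $\deg_G(v) \equiv |S_0| \pmod 2$ when $m$ is odd and $\deg_G(v) \equiv |S_0| + |S_{m/2}| \pmod 2$ when $m$ is even — i.e.\ $\deg_G(v) \equiv \Delta_S - 1 \pmod 2$. Hence every $G$-vertex row has even weight iff $\Delta_S$ is odd.

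Combining the two conditions: all generator rows have even weight iff $n$ is odd \emph{and} $\Delta_S$ is odd, which is exactly the claimed criterion, provided step (ii) — even pairwise overlaps — is automatic. For that I would argue as follows: because $C = C^*$, for any two generator rows $\br_i, \br_j$ we have $\br_i * \br_j = 0$, and expanding the trace-Hermitian form shows the number of positions where $\br_i, \br_j$ are both nonzero and "interact nontrivially" is even; together with each $\wt(\br_i)$ being even this gives $\wt(\br_i + \br_j)$ even, and the same propagates inductively to arbitrary $\F_2$-sums. This is precisely the standard fact that a self-dual additive $\F_4$-code with an even-weight generating set is Type II, which I would cite from \cite{Danielsen2006} rather than reprove.

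The main obstacle is the degree count in the even-$m$ case: one must verify that the contributions of $S_1, \ldots, S_{\floor{m/2}-1}$ to $\deg_G(v)$ are even for every $v$, which requires checking that the "distance-$k$" adjacencies split symmetrically between the block $i+k$ above and the block $i-k$ below, \emph{and} handling the wrap-around where $i+k$ and $i-k$ coincide modulo $m$ (this is exactly where property 3, $\alpha^m S_k = S_k$, and the even-$m$ special role of $S_{m/2}$ via property 4 enter). Once that parity bookkeeping is pinned down, the rest is immediate.
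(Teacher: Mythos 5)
Your proposal is correct and follows essentially the same route as the paper: reduce to the \cite{Danielsen2006} criterion that a graph code is Type~II precisely when every vertex of the underlying graph has odd degree (equivalently, every generator row of $\overline{\Gamma}+\omega I$ has even weight), then compute $\deg_{\overline G}(v_\infty)=n$ and $\deg_{\overline G}(v_i)\equiv\Delta_S\pmod 2$ from the metacirculant structure. The paper simply cites the criterion and states the degree formula; you unpack the trace-Hermitian argument behind the criterion and the block-by-block degree count more explicitly, but the substance is identical (note only that for $1\le k<m/2$ the blocks $V_{i+k}$ and $V_{i-k}$ never coincide, so the ``wrap-around'' you worry about occurs solely at $k=m/2$, exactly the term your $\Delta_S$ already isolates).
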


\begin{proof}
An additive graph code $C(G)$ is Type II if and only if all vertices of $G$ are of odd degree~\cite{Danielsen2006}. In the bordered metacirculant 
$\overline{G}\left(m,\ell,\alpha, S_{0}, S_{1}, \ldots, S_{\floor{m/2}}\right)$, $\deg(v_{\infty}) = m \cdot \ell$ and
\begin{align}\label{eq:degree}
\deg(v_i) &=
\begin{cases} 
|S_0| + |S_1| + 1 \mbox{ when } m=2,\\
|S_0| + |S_{\floor{ m/2}}| + 
2 \left(\sum_{r=1}^{\floor{m/2}-1} |S_r|\right) + 1 \mbox{ for even } m \geq 4,\\
|S_0| + 2 \left(\sum_{r=1}^{\floor{m/2}} |S_r|\right) + 1 \mbox{ for odd } m \geq 3.
\end{cases}
\end{align}
Thus follows the desired conclusion.
\end{proof}

\begin{table}[h!]
\caption{The $154$ edges of $G_{28}$ as $(i,\{j \in J\})$. 
For $0 \leq t <14$, vertices $(0,t)$ are $1$ to $14$ and vertices $(1,t)$ are $15$ to $28$.}
\label{graph:28}
\centering
\begin{tabular}{l}
\toprule
$(1,\{2, 4, 8, 11, 13, 14, 16, 17, 19, 20, 24 \}),
(2,\{7, 11, 12, 14, 15, 17, 18, 20, 23, 27 \}),$\\ 
$(3,\{4, 6, 10, 13, 15, 16, 18, 19, 21, 22, 26 \}),
(4,\{9, 13, 14, 16, 17, 19, 20, 22, 25 \}),$\\
$(5,\{6, 8, 12, 15, 17, 18, 20, 21, 23, 24, 28 \}),
(6,\{11, 15, 16, 18, 19, 21, 22, 24, 27 \}),$\\
$(7,\{8, 10, 14, 17, 19, 20, 22, 23, 25, 26 \}),
(8,\{13, 17, 18, 20, 21, 23, 24, 26 \}),$\\
$(9,\{10, 12, 16, 19, 21, 22, 24, 25, 27, 28 \}),
(10,\{15, 19, 20, 22, 23, 25, 26, 28 \}),$\\
$(11,\{12, 14, 18, 21, 23, 24, 26, 27 \})
(12,\{17, 21, 22, 24, 25, 27, 28 \}),$\\
$(13,\{14, 16, 20, 23, 25, 26, 28 \}),
(14,\{19, 23, 24, 26, 27 \}),
(15,\{16, 18, 22, 25, 27, 28 \}),$\\
$(16,\{21, 25, 26, 28 \}),
(17,\{18, 20, 24, 27 \}),
(18,\{23, 27, 28 \}),(19,\{20, 22, 26 \}),$\\
$(20,\{25 \}),(21,\{22, 24, 28 \}),
(22,\{27 \}),(23,\{24, 26 \}),
(25,\{26, 28 \}),(27,\{28 \}).$\\
\bottomrule 
\end{tabular}
\end{table}

\begin{table}
\caption{The $198$ and $342$ respective edges of $G_{36,1}$ and $G_{36,2}$ as $(i,\{j \in J\})$. For $0 \leq t <18$, vertices $(0,t)$ are $1$ to $18$ while $(1,t)$ are $19$ to $36$.}
\label{graph:36}
\centering
\resizebox{\textwidth}{!}{%
\begin{tabular}{l}
\toprule
$(1, \{ 3, 6, 7, 14, 18, 19, 24, 31, 32, 34, 35 \}),
(2, \{ 4, 5, 7, 8, 15, 20, 21, 25, 32, 33, 36 \}),$ \\

$(3, \{ 5, 8, 9, 16, 20, 21, 26, 33, 34, 36 \}),
(4, \{ 6, 7, 9, 10, 17, 22, 23, 27, 34, 35 \}),$ \\

$(5, \{ 7, 10, 11, 18, 22, 23, 28, 35, 36 \}),
(6, \{ 8, 9, 11, 12, 19, 24, 25, 29, 36 \}),$ \\

$(7, \{ 9, 12, 13, 20, 24, 25, 30 \}),
(8, \{ 10, 11, 13, 14, 21, 26, 27, 31 \}),
(9, \{ 11, 14, 15, 22, 26, 27, 32 \}),$\\

$(10, \{ 12, 13, 15, 16, 23, 28, 29, 33 \}),
(11, \{ 13, 16, 17, 24, 28, 29, 34 \}),$\\

$(12, \{ 14, 15, 17, 18, 25, 30, 31, 35 \}),
(13, \{ 15, 18, 19, 26, 30, 31, 36 \}),$\\

$(14, \{ 16, 17, 19, 20, 27, 32, 33 \}),
(15, \{ 17, 20, 21, 28, 32, 33 \}),
(16, \{ 18, 19, 21, 22, 29, 34, 35 \}),$\\

$(17, \{ 19, 22, 23, 30, 34, 35 \}),
(18, \{ 20, 21, 23, 24, 31, 36 \}),
(19, \{ 21, 24, 25, 32, 36 \}),$\\

$(20, \{ 22, 23, 25, 26, 33 \}),
(21, \{ 23, 26, 27, 34 \}),
(22, \{ 24, 25, 27, 28, 35 \}),$\\

$(23, \{ 25, 28, 29, 36 \}),
(24, \{ 26, 27, 29, 30 \}),
(25, \{ 27, 30, 31 \}),
(26, \{ 28, 29, 31, 32 \}),$\\

$(27, \{ 29, 32, 33 \}),
(28, \{ 30, 31, 33, 34 \}),
(29, \{ 31, 34, 35 \}),
(30, \{ 32, 33, 35, 36 \}),$\\

$(31, \{ 33, 36 \}),
(32, \{ 34, 35 \}),
(33, \{ 35 \}),(34, \{ 36 \})$. \\
\midrule

$(1, \{ 2, 3, 4, 5, 6, 7, 8, 11, 12, 18, 20, 24, 26, 27, 28, 31, 32, 33, 35 \}),$ \\

$(2, \{ 4, 6, 7, 8, 11, 12, 13, 15, 19, 21, 27, 28, 31, 32, 33, 34, 35, 36 \}),$ \\

$(3, \{ 4, 5, 6, 7, 8, 9, 10, 13, 14, 20, 22, 26, 28, 29, 30, 33, 34, 35 \}),$ \\

$(4, \{ 6, 8, 9, 10, 13, 14, 15, 17, 21, 23, 29, 30, 33, 34, 35, 36 \}),$ \\

$(5, \{ 6, 7, 8, 9, 10, 11, 12, 15, 16, 22, 24, 28, 30, 31, 32, 35, 36 \}),$ \\

$(6, \{ 8, 10, 11, 12, 15, 16, 17, 19, 23, 25, 31, 32, 35, 36 \}),$ \\

$(7, \{ 8, 9, 10, 11, 12, 13, 14, 17, 18, 24, 26, 30, 32, 33, 34 \}),$\\

$(8, \{ 10, 12, 13, 14, 17, 18, 19, 21, 25, 27, 33, 34 \}),$ \\

$(9, \{ 10, 11, 12, 13, 14, 15, 16, 19, 20, 26, 28, 32, 34, 35, 36 \}),$ \\ 

$(10, \{ 12, 14, 15, 16, 19, 20, 21, 23, 27, 29, 35, 36 \}),$\\

$(11, \{ 12, 13, 14, 15, 16, 17, 18, 21, 22, 28, 30, 34, 36 \}),
(12, \{ 14, 16, 17, 18, 21, 22, 23, 25, 29, 31 \}),$\\

$(13, \{ 14, 15, 16, 17, 18, 19, 20, 23, 24, 30, 32, 36 \}),
(14, \{ 16, 18, 19, 20, 23, 24, 25, 27, 31, 33 \}),$\\

$(15, \{ 16, 17, 18, 19, 20, 21, 22, 25, 26, 32, 34 \}),
(16, \{ 18, 20, 21, 22, 25, 26, 27, 29, 33, 35 \}),$\\

$(17, \{ 18, 19, 20, 21, 22, 23, 24, 27, 28, 34, 36 \}),
(18, \{ 20, 22, 23, 24, 27, 28, 29, 31, 35 \}),$\\

$(19, \{ 20, 21, 22, 23, 24, 25, 26, 29, 30, 36 \}),
(20, \{ 22, 24, 25, 26, 29, 30, 31, 33 \}),$\\

$(21, \{ 22, 23, 24, 25, 26, 27, 28, 31, 32 \}),
(22, \{ 24, 26, 27, 28, 31, 32, 33, 35 \}),$\\

$(23, \{ 24, 25, 26, 27, 28, 29, 30, 33, 34 \}),
(24, \{ 26, 28, 29, 30, 33, 34, 35 \}),$\\

$(25, \{ 26, 27, 28, 29, 30, 31, 32, 35, 36 \}),
(26, \{ 28, 30, 31, 32, 35, 36 \}),$\\

$(27, \{ 28, 29, 30, 31, 32, 33, 34 \}),
(28, \{ 30, 32, 33, 34 \}),
(29, \{ 30, 31, 32, 33, 34, 35, 36 \}),$\\

$(30, \{ 32, 34, 35, 36 \}),(31, \{ 32, 33, 34, 35, 36 \}),
(32, \{ 34, 36 \}),$\\

$(33, \{ 34, 35, 36 \}),(34, \{ 36 \}),(35, \{ 36 \})$.\\
\bottomrule
\end{tabular}
}
\end{table}

There exists a unique, up to equivalence, $(37,2^{37}, 11)_4$ additive self-dual code from bordered circulant construction~\cite{Saito2019}. Using the bordered metacirculant construction, we found two inequivalent codes. The complete edge sets of the corresponding metacirculant graphs are given in Table~\ref{graph:36}.
\begin{proposition}
There are at least two inequivalent self-dual additive $(37, 2^{37}, 11)_4$ codes, built respectively from the graphs 
\begin{align}
\overline{G}_{36,1} &:= \overline{G}(2, 18, 17, \{ 1, 3, 9, 15, 17 \}, \{2,6, 8, 11, 15, 16 \}) \mbox{ and}\\
\overline{G}_{36,2} &:= \overline{G}(2, 18, 17,
\{ 1, 2, 3, 5, 13, 15, 16, 17 \},\{ 0, 1, 2, 3, 5, 8, 9, 11, 12, 13, 15 \}).
\end{align}
The codes have, respectively, $252$ and $270$ codewords of weight $11$. The corresponding $\dsb{37,0,11}_2$ code has the best-known parameters~\cite{Grassl:codetables}.
\end{proposition}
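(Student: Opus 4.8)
The plan is to establish four things, in this order: that the two listed parameter tuples define valid bordered metacirculant graphs; that each induced graph code is a self-dual additive $(37,2^{37},11)_4$ code, with $252$ respectively $270$ codewords of weight $11$; that the two codes are inequivalent; and that the resulting $\dsb{37,0,11}_2$ qubit code is optimal among recorded ones. For the first, one checks the four conditions of Definition~\ref{def:mc} with $m=2$, $\ell=18$, and $\alpha = 17 \equiv -1 \Mod{18}$. Because $\alpha^2 \equiv 1$, conditions~3 and~4 collapse to the tautologies $S_1 = S_1$ and $-S_1 = -S_1$; condition~2 is immediate; and condition~1, $S_0 = -S_0$ in $\Z_{18}$, is a one-line check on each set (for $\overline G_{36,1}$: $-1 \equiv 17$, $-3 \equiv 15$, $-9 \equiv 9$, and so on, and similarly for $\overline G_{36,2}$). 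The edge sets of the two bordered metacirculants are then as recorded in Table~\ref{graph:36}, and by the discussion of Section~\ref{sec:intro} each graph code $C(\overline G)$ is automatically a self-dual additive code of length $37$, with $2^{37}$ codewords, generated over $\F_2$ by the rows of $\overline\Gamma + \omega\, I_{37}$.

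The substantive step is to certify that the minimum weight of each code equals $11$ and to count its weight-$11$ codewords. I would obtain both from a computer-algebra computation of the low-order terms of the Hamming weight enumerator of each code -- a Brouwer--Zimmermann-style minimum-weight search adapted to additive codes over $\F_4$, or a direct enumeration of low-weight codewords that exploits the $\Z_{18}$-shift symmetry of $\overline\Gamma$ (which fixes the border vertex $v_\infty$ and cyclically permutes each block $V_i$, hence acts as a weight-preserving automorphism of the code) to run over orbit representatives only. This simultaneously shows that no nonzero codeword has weight $\le 10$, so $d \ge 11$, and exhibits $252$ (respectively $270$) codewords of weight $11$, so $d = 11$; the associated stabilizer code then has parameters $\dsb{37,0,11}_2$ via the correspondence of~\cite{Calderbank1998}. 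I expect this certification to be the main obstacle, since exact minimum-distance computation is hard in general~\cite{Vardy1997} and a code with $2^{37}$ codewords is not small; it is the metacirculant symmetry and the fixed role of $v_\infty$ that keep the computation tractable.

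For inequivalence, I would use the fact that the Hamming weight enumerator is an invariant of additive-code equivalence, whose generators -- permutations of the $37$ coordinates and conjugation $\omega \leftrightarrow \overline{\omega}$ on individual coordinates -- all preserve weight. Coarser invariants fail to separate the two codes: both have length $37$, both have minimum distance $11$, and both are Type~I by Theorem~\ref{thm:even} (here $\Delta_S = 12$ and $\Delta_S = 20$ are even while $n = 37$ is odd). But the weight-$11$ counts already differ, $252 \neq 270$, so the codes cannot be equivalent. One may additionally note that the underlying graphs are themselves non-isomorphic: $\overline G_{36,1}$ has $234$ edges and $\overline G_{36,2}$ has $378$, since by~\eqref{eq:degree} every non-border vertex has degree $12$ in the first and $20$ in the second; but graph non-isomorphism by itself would not force code inequivalence, so it is the weight-enumerator argument that does the work. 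Finally, the optimality claim is a table lookup: \cite{Grassl:codetables} records $11$ as the best known minimum distance for a qubit code of length $37$ and dimension $0$, matching what both codes achieve.
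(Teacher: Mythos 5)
Your proposal is correct in substance and follows essentially the same path as the paper, which itself gives no explicit proof of this proposition: the parameter tuples are checked against Definition~\ref{def:mc}, the weight distribution is obtained by machine computation, and the differing weight-$11$ counts ($252\neq 270$) establish inequivalence because the Hamming weight enumerator is preserved under the generators of additive-code equivalence. One small slip worth flagging: when invoking Theorem~\ref{thm:even} you write ``$n=37$ is odd,'' but in that theorem $n$ is defined as $m\cdot\ell=36$ (the number of vertices of the \emph{unbordered} metacirculant, equal to $\deg v_\infty$), which is even; your conclusion that both codes are Type~I is nevertheless correct, since $\Delta_S=12$ and $\Delta_S=20$ are both even, and the theorem requires \emph{both} $\Delta_S$ and $m\ell$ to be odd for Type~II.
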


\begin{table}
\caption{The $1640$ edges of $G_{80,1}$ as 
$(i,\{j \in J\}),$ where $i,j$ are indices of the adjacent vertices. 
Vertices are partitioned into $V_0,V_1,\ldots,V_7$, with the $10$ vertices in 
$\{(k,t) : 0 \leq k < 8,\, 0 \leq t < 10\}$ in $V_k$. They are renamed as vertices $z,z+1,\ldots,z+9$ for $0 \leq k < 8$ and $z=10 k + 1$.}
\label{graph:80}
\renewcommand{\arraystretch}{1.05}
\centering
\resizebox{\textwidth}{!}{%
\begin{tabular}{l}
\toprule
$(1, \{ 2, 3, 4, 5, 6, 7, 8, 9, 10, 13, 16, 18, 19, 21, 23, 24, 26, 27, 31, 33, 35, 37, 39, 40, 49, 50, 52, 53, 56, 58, 64, 66, 67, 69, 70, 71, 73, 75, 77, 79, 80 \}),$ \\ 
$(2, \{ 3, 4, 5, 6, 7, 8, 11, 12, 16, 17, 18, 19, 21, 22, 25, 26, 28, 30, 32, 33, 35, 36, 38, 40, 51, 52, 54, 55, 56, 57, 58, 59, 62, 65, 66, 68, 70, 72, 73, 75 \}),$ \\ 
$(3, \{ 4, 5, 6, 7, 8, 9, 10, 11, 13, 15, 17, 20, 21, 23, 24, 26, 28, 34, 35, 36, 38, 39, 49, 50, 51, 53, 55, 57, 60, 61, 65, 66, 68, 69, 71, 74, 75, 76, 79 \}),$ \\ 
$(4, \{ 5, 6, 7, 8, 11, 13, 18, 19, 20, 22, 24, 27, 28, 29, 32, 33, 34, 37, 38, 40, 50, 51, 53, 54, 56, 58, 59, 60, 62, 64, 67, 68, 69, 71, 72, 74, 77, 78 \}),$ \\ 
$(5, \{ 6, 7, 8, 9, 12, 13, 14, 17, 19, 20, 22, 23, 27, 30, 31, 33, 35, 36, 37, 39, 49, 52, 53, 54, 56, 60, 62, 65, 66, 67, 70, 71, 73, 75, 76, 77, 79 \}),$ \\ 

$(6, \{ 7, 8, 12, 15, 16, 17, 18, 21, 22, 23, 26, 28, 29, 30, 32, 34, 36, 37, 39,40, 50, 51, 52, 55, 56, 58, 61, 62, 63, 66, 68, 69, 70, 72, 77, 79 \}),$ \\ 
$(7, \{ 8, 9, 11, 13, 14, 15, 17, 19, 20, 21, 24, 30, 32, 34, 35, 38, 39, 40, 49,51, 53, 54, 55, 57, 61, 64, 65, 67, 69, 70, 72, 75, 78, 79, 80 \}),$ \\ 
$(8, \{ 9, 15, 18, 20, 22, 23, 24, 25, 28, 31, 32, 33, 34, 36, 37, 38, 49, 50, 52, 54, 55, 58, 60, 62, 63, 64, 65, 67, 68, 71, 72, 73, 74, 78 \}),$ \\ 
$(9, \{ 10, 11, 12, 13, 14, 15, 16, 17, 18, 21, 24, 26, 27, 29, 31, 32, 34, 35, 39, 41, 43, 45, 47, 48, 57, 58, 60, 61, 64, 66, 72, 74, 75, 77, 78, 79 \}),$ \\ 
$(10, \{ 11, 12, 13, 14, 15, 16, 19, 20, 24, 25, 26, 27, 29, 30, 33, 34, 36, 38, 40, 41, 43, 44, 46, 48, 59, 60, 62, 63, 64, 65, 66, 67, 70, 73, 74, 76, 78, 80 \}),$ \\

$(11, \{ 12, 13, 14, 15, 16, 17, 18, 19, 21, 23, 25, 28, 29, 31, 32, 34, 36, 42, 43, 44, 46, 47, 57, 58, 59, 61, 63, 65, 68, 69, 73, 74, 76, 77, 79 \}),$ \\ 
$(12, \{ 13, 14, 15, 16, 19, 21, 26, 27, 28, 30, 32, 35, 36, 37, 40, 41, 42, 45, 46, 48, 58, 59, 61, 62, 64, 66, 67, 68, 70, 72, 75, 76, 77, 79, 80 \}),$ \\ 
$(13, \{ 14, 15, 16, 17, 20, 21, 22, 25, 27, 28, 30, 31, 35, 38, 39, 41, 43, 44, 45, 47, 57, 60, 61, 62, 64, 68, 70, 73, 74, 75, 78, 79 \}),$ \\ 
$(14, \{ 15, 16, 20, 23, 24, 25, 26, 29, 30, 31, 34, 36, 37, 38, 40, 42, 44, 45, 47, 48, 58, 59, 60, 63, 64, 66, 69, 70, 71, 74, 76, 77, 78, 80 \}),$ \\ 
$(15, \{ 16, 17, 19, 21, 22, 23, 25, 27, 28, 29, 32, 38, 40, 42, 43, 46, 47, 48, 57, 59, 61, 62, 63, 65, 69, 72, 73, 75, 77, 78, 80 \}),$ \\ 

$(16, \{ 17, 23, 26, 28, 30, 31, 32, 33, 36, 39, 40, 41, 42, 44, 45, 46, 57, 58, 60, 62, 63, 66, 68, 70, 71, 72, 73, 75, 76, 79, 80 \}),$ \\ 
$(17, \{ 18, 19, 20, 21, 22, 23, 24, 25, 26, 29, 32, 34, 35, 37, 39, 40, 42, 43, 47, 49, 51, 53, 55, 56, 65, 66, 68, 69, 72, 74, 80 \}),$ \\ 
$(18, \{ 19, 20, 21, 22, 23, 24, 27, 28, 32, 33, 34, 35, 37, 38, 41, 42, 44, 46, 48, 49, 51, 52, 54, 56, 67, 68, 70, 71, 72, 73, 74, 75, 78 \}),$ \\ 
$(19, \{ 20, 21, 22, 23, 24, 25, 26, 27, 29, 31, 33, 36, 37, 39, 40, 42, 44, 50, 51, 52, 54, 55, 65, 66, 67, 69, 71, 73, 76, 77 \}),$ \\ 
$(20, \{ 21, 22, 23, 24, 27, 29, 34, 35, 36, 38, 40, 43, 44, 45, 48, 49, 50, 53, 54, 56, 66, 67, 69, 70, 72, 74, 75, 76, 78, 80 \}),$ \\ 
			
$(21, \{ 22, 23, 24, 25, 28, 29, 30, 33, 35, 36, 38, 39, 43, 46, 47, 49, 51, 52, 53, 55, 65, 68, 69, 70, 72, 76, 78 \}),$ \\ 
$(22, \{ 23, 24, 28, 31, 32, 33, 34, 37, 38, 39, 42, 44, 45, 46, 48, 50, 52, 53, 55, 56, 66, 67, 68, 71, 72, 74, 77, 78, 79 \}),$ \\ 
$(23, \{ 24, 25, 27, 29, 30, 31, 33, 35, 36, 37, 40, 46, 48, 50, 51, 54, 55, 56, 65, 67, 69, 70, 71, 73, 77, 80 \}),$ \\ 
$(24, \{ 25, 31, 34, 36, 38, 39, 40, 41, 44, 47, 48, 49, 50, 52, 53, 54, 65, 66, 68, 70, 71, 74, 76, 78, 79, 80 \}),$ \\ 
$(25, \{ 26, 27, 28, 29, 30, 31, 32, 33, 34, 37, 40, 42, 43, 45, 47, 48, 50, 51, 55, 57, 59, 61, 63, 64, 73, 74, 76, 77, 80 \}),$ \\ 

$(26, \{ 27, 28, 29, 30, 31, 32, 35, 36, 40, 41, 42, 43, 45, 46, 49, 50, 52, 54, 56, 57, 59, 60, 62, 64, 75, 76, 78, 79, 80 \}),$ \\ 
$(27, \{ 28, 29, 30, 31, 32, 33, 34, 35, 37, 39, 41, 44, 45, 47, 48, 50, 52, 58, 59, 60, 62, 63, 73, 74, 75, 77, 79 \}),$ \\ 
$(28, \{ 29, 30, 31, 32, 35, 37, 42, 43, 44, 46, 48, 51, 52, 53, 56, 57, 58, 61, 62, 64, 74, 75, 77, 78, 80 \}),$ \\ 
$(29, \{ 30, 31, 32, 33, 36, 37, 38, 41, 43, 44, 46, 47, 51, 54, 55, 57, 59, 60, 61, 63, 73, 76, 77, 78, 80 \}),$ \\ 
$(30, \{ 31, 32, 36, 39, 40, 41, 42, 45, 46, 47, 50, 52, 53, 54, 56, 58, 60, 61, 63, 64, 74, 75, 76, 79, 80 \}),$ \\ 
			
$(31, \{ 32, 33, 35, 37, 38, 39, 41, 43, 44, 45, 48, 54, 56, 58, 59, 62, 63, 64, 73, 75, 77, 78, 79 \}),$ \\
$(32, \{ 33, 39, 42, 44, 46, 47, 48, 49, 52, 55, 56, 57, 58, 60, 61, 62, 73, 74, 76, 78, 79 \}),$ \\ 
$(33, \{ 34, 35, 36, 37, 38, 39, 40, 41, 42, 45, 48, 50, 51, 53, 55, 56, 58, 59, 63, 65, 67, 69, 71, 72 \}),$ \\ 
$(34, \{ 35, 36, 37, 38, 39, 40, 43, 44, 48, 49, 50, 51, 53, 54, 57, 58, 60, 62, 64, 65, 67, 68, 70, 72 \}),$ \\ 
$(35, \{ 36, 37, 38, 39, 40, 41, 42, 43, 45, 47, 49, 52, 53, 55, 56, 58, 60, 66, 67, 68, 70, 71 \}),$ \\ 

$(36, \{ 37, 38, 39, 40, 43, 45, 50, 51, 52, 54, 56, 59, 60, 61, 64, 65, 66, 69, 70, 72 \}),$ \\
$(37, \{ 38, 39, 40, 41, 44, 45, 46, 49, 51, 52, 54, 55, 59, 62, 63, 65, 67, 68, 69, 71 \}),$ \\ 
$(38, \{ 39, 40, 44, 47, 48, 49, 50, 53, 54, 55, 58, 60, 61, 62, 64, 66, 68, 69, 71, 72 \}),$ \\
$(39, \{ 40, 41, 43, 45, 46, 47, 49, 51, 52, 53, 56, 62, 64, 66, 67, 70, 71, 72 \}),$ \\
$(40, \{ 41, 47, 50, 52, 54, 55, 56, 57, 60, 63, 64, 65, 66, 68, 69, 70 \}),$ \\
$(41, \{ 42, 43, 44, 45, 46, 47, 48, 49, 50, 53, 56, 58, 59, 61, 63, 64, 66, 67, 71, 73, 75, 77, 79, 80 \}),$ \\ 

$(42, \{ 43, 44, 45, 46, 47, 48, 51, 52, 56, 57, 58, 59, 61, 62, 65, 66, 68, 70, 72, 73, 75, 76, 78, 80 \}),$ \\

$(43, \{ 44, 45, 46, 47, 48, 49, 50, 51, 53, 55, 57, 60, 61, 63, 64, 66, 68, 74, 75, 76, 78, 79 \}),$ \\
$(44, \{ 45, 46, 47, 48, 51, 53, 58, 59, 60, 62, 64, 67, 68, 69, 72, 73, 74, 77, 78, 80 \}),$ \\
$(45, \{ 46, 47, 48, 49, 52, 53, 54, 57, 59, 60, 62, 63, 67, 70, 71, 73, 75, 76, 77, 79 \}),$ \\
$(46, \{ 47, 48, 52, 55, 56, 57, 58, 61, 62, 63, 66, 68, 69, 70, 72, 74, 76, 77, 79, 80 \}),$ \\ 
$(47, \{ 48, 49, 51, 53, 54, 55, 57, 59, 60, 61, 64, 70, 72, 74, 75, 78, 79, 80 \}),
(48, \{49,55,58,60,62, 63, 64, 65, 68, 71, 72, 73, 74, 76, 77, 78 \}),$ \\ 
$(49, \{ 50, 51, 52, 53, 54, 55, 56, 57, 58, 61, 64, 66, 67, 69, 71, 72, 74, 75, 79 \}),$ \\
$(50, \{ 51, 52, 53, 54, 55, 56, 59, 60, 64, 65, 66, 67, 69, 70, 73, 74, 76, 78, 80 \}),$ \\
			
$(51, \{ 52, 53, 54, 55, 56, 57, 58, 59, 61, 63, 65, 68, 69, 71, 72, 74, 76 \}),
(52, \{ 53, 54, 55, 56, 59, 61, 66, 67, 68, 70, 72, 75, 76, 77, 80 \}),$ \\ 

$(53, \{ 54, 55, 56, 57, 60, 61, 62, 65, 67, 68, 70, 71, 75, 78, 79 \}),
(54, \{ 55, 56, 60, 63, 64, 65, 66, 69, 70, 71, 74, 76, 77, 78, 80 \})$\\ 

$(55, \{ 56, 57, 59, 61, 62, 63, 65, 67, 68, 69, 72, 78, 80 \}),
(56, \{ 57, 63, 66, 68, 70, 71, 72, 73, 76, 79, 80 \}),$ \\
$(57, \{ 58, 59, 60, 61, 62, 63, 64, 65, 66, 69, 72, 74, 75, 77, 79, 80 \}),
(58, \{ 59, 60, 61, 62, 63, 64, 67, 68, 72, 73, 74, 75, 77, 78 \}),$ \\

$(59, \{ 60, 61, 62, 63, 64, 65, 66, 67, 69, 71, 73, 76, 77, 79, 80 \}),
(60, \{ 61, 62, 63, 64, 67, 69, 74, 75, 76, 78, 80 \}),$ \\ 
			
$(61, \{ 62, 63, 64, 65, 68, 69, 70, 73, 75, 76, 78, 79 \}),
(62, \{ 63, 64, 68, 71, 72, 73, 74, 77, 78, 79 \}),$ \\

$(63, \{ 64, 65, 67, 69, 70, 71, 73, 75, 76, 77, 80 \}),
(64, \{ 65, 71, 74, 76, 78, 79, 80 \}),
(65, \{ 66, 67, 68, 69, 70, 71, 72, 73, 74, 77, 80 \}),$ \\
$(66, \{ 67, 68, 69, 70, 71, 72, 75, 76, 80 \}),
(67, \{ 68, 69, 70, 71, 72, 73, 74, 75, 77, 79 \}),
(68, \{ 69, 70, 71, 72, 75, 77 \}),$ \\
$(69, \{ 70, 71, 72, 73, 76, 77, 78 \}),
(70, \{ 71, 72, 76, 79, 80 \}),
(71, \{ 72, 73, 75, 77, 78, 79 \}),
(72, \{ 73, 79 \}),$ \\ 

$(73, \{ 74, 75, 76, 77, 78, 79, 80 \}),
(74, \{ 75, 76, 77, 78, 79, 80 \}),
(75, \{ 76, 77, 78, 79, 80 \}),$ \\
$(76, \{ 77, 78, 79, 80 \}),
(77, \{ 78, 79, 80 \}),
(78, \{ 79, 80 \}),(79, \{ 80 \}).$ \\ 
\bottomrule 
\end{tabular}
}
\end{table}

\section{Two new qubit codes}\label{sec:qubit}

We presents two new codes found via a randomized search for bordered metacirculant graphs. Calculating the minimum distances of $Q_{81,1}$ and $Q_{94}$ in {\tt MAGMA} \cite{BOSMA1997} took $6$ and $27$ days. Secondary constructions often derive qubit codes that perform better than the current best-known. They are explained, for examples, in \cite[Theorem 6]{Calderbank1998} and \cite[Section 4]{Grassl20}. These propagation rules are routinely performed on all new submissions to the online table and improved codes, if found, are then put on record. We omit the derivation of possibly new codes by secondary constructions.

\begin{proposition}\label{prop81}
For $k \in \{1,2,3\}$, each $\overline{G}_{80,k}$, with $G_{80,k}$ given respectively by
\begin{align}\label{eq:G81}
& G(8, 10, 7, \{ 1, 4, 6, 9 \},\{ 0, 1, 2, 3, 6, 7, 8 \}, 
\{ 0, 2, 3, 4, 8, 9 \},\{ 0, 6 \},\{ 0, 1, 2, 4, 6, 8, 9 \}),\\
& G(8, 10, 3, \{ 4, 5, 6 \},
\{ 0, 1, 2, 4, 5, 7, 9 \},\{ 0, 1, 5, 8, 9 \}, 
\{ 0, 2, 3, 7 \},\{ 1, 2, 3, 4, 5, 6, 7, 8, 9 \}),\\
& G(10, 8, 5, \{ 2, 3, 5, 6 \},\{3\},
\{ 2, 4, 6, 7 \}, \{ 5, 6 \}, 
\{ 0, 1, 2, 3, 4, 6 \},\{ 0, 2, 5, 6, 7 \}),
\end{align}
generates a new $(81,2^{81}, 20)_4$ Type I additive self-dual code. The corresponding $\dsb{81, 0, 20}_2$ code $Q_{81,k}$ improves on the $\dsb{81, 0, 19}_2$ code in \cite{Grassl:codetables}. Table~\ref{graph:80} lists the edges of $G_{80,1}$. 
\end{proposition}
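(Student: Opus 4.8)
The plan is to decouple the statement into a combinatorial part, a type determination, and a computational part. The combinatorial part shows that each of the three parameter sets listed in Proposition~\ref{prop81} defines a legitimate bordered metacirculant graph $\overline{G}_{80,k}$ on $m\ell+1=81$ vertices, whence $\overline{\Gamma}_{80,k}+\omega I_{81}$ generates a symplectic self-dual additive $(81,2^{81},d)_4$ code $C_{81,k}$, the self-duality being automatic for graph codes (Section~\ref{sec:intro}). The type determination invokes Theorem~\ref{thm:even}, and the computational part certifies $d=20$, from which the comparison with \cite{Grassl:codetables} follows.

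For the combinatorial part I would verify the four conditions of Definition~\ref{def:mc} in each case. In all three cases $m$ is even ($m=8$ for $k=1,2$ and $m=10$ for $k=3$), so what must be checked is that $\gcd(\alpha,\ell)=1$, that $S_0=-S_0$ with $0\notin S_0$, that $\alpha^m S_r=S_r$ for $1\le r<m/2$, and that $\alpha^{m/2}S_{m/2}=-S_{m/2}$. The orbit conditions are immediate once one records $7^4\equiv 3^4\equiv 1\pmod{10}$ and $5^2\equiv 1\pmod 8$, so that $\alpha^m\equiv 1$ throughout and the half-order multiplier reduces to $1$, $1$, and $5$ for $k=1,2,3$; the remaining finite checks ($\gcd(\alpha,\ell)=1$; $S_0=-S_0$ and $0\notin S_0$ in each case; $S_4=-S_4$ in $\Z_{10}$ for $k=1,2$; and $5\,S_5=-S_5$ in $\Z_8$ for $k=3$) are routine. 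As a consistency check, $G_{80,1}$ then has $\tfrac12\cdot 80\cdot 41 = 1640$ edges, matching Table~\ref{graph:80}.

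The type is settled in one line by Theorem~\ref{thm:even}: the unbordered metacirculant has $m\ell = 80$ vertices, an even number, so $\deg(v_\infty)=80$ is even and the criterion ``$\Delta_S$ and $m\ell$ both odd'' fails; hence $C_{81,k}$ is Type~I for every $k$, irrespective of $\Delta_S = |S_0| + |S_{m/2}| + 1$ (which equals $12$, $13$, $10$ for $k=1,2,3$).

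The substantive step is to prove $d(C_{81,k}) = 20$, which is done by computer: one assembles the generator matrix $\overline{\Gamma}_{80,k}+\omega I_{81}$ from the edge lists (Table~\ref{graph:80} for $k=1$, and analogously for $k=2,3$) and runs a minimum-weight computation for additive codes over $\F_4$, e.g.\ the Brouwer--Zimmermann-style routine in {\tt MAGMA} \cite{BOSMA1997}. Exhibiting a single weight-$20$ codeword gives $d\le 20$ cheaply; the reverse inequality $d\ge 20$ is the expensive direction, since the code has $2^{81}$ codewords and cannot be enumerated exhaustively. The metacirculant symmetry endows $\overline{G}_{80,k}$ with a nontrivial automorphism group fixing $v_\infty$ that can prune the search, but this bottleneck is essentially unavoidable and accounts for the running time reported in Section~\ref{sec:qubit}; it is the main obstacle of the argument. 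Once $d=20$ is established, the code strictly improves on the $\dsb{81,0,19}_2$ entry of \cite{Grassl:codetables}, and the three codes $Q_{81,1},Q_{81,2},Q_{81,3}$ are pairwise inequivalent by \cite[Thm.~6]{Danielsen2006}, since the graphs $\overline{G}_{80,1}$, $\overline{G}_{80,2}$, $\overline{G}_{80,3}$ have distinct degree sequences: apart from $v_\infty$, their vertices have degree $42$, $45$, and $36$ respectively.
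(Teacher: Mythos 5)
The proposal tracks the paper's (implicit) reasoning: the paper offers no written proof of Proposition~\ref{prop81}, relying on the verification that the parameters define valid metacirculant graphs, that graph codes are automatically self-dual, that Theorem~\ref{thm:even} settles the type, and that MAGMA certifies $d=20$. Your decomposition into combinatorial, type, and computational parts matches this exactly, and the arithmetic you carry out ($7^4\equiv 3^4\equiv 1\pmod{10}$, $5^5\equiv 5\pmod 8$, the closure checks on $S_0$ and $S_{m/2}$, the valency and edge-count consistency against Tables~\ref{graph:80} and~\ref{table:property}) is all correct.

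One genuine error, though it concerns a claim you added that the proposition does not assert: the assertion that $Q_{81,1}$, $Q_{81,2}$, $Q_{81,3}$ are pairwise inequivalent because the graphs $\overline{G}_{80,k}$ have distinct degree sequences does not follow from \cite[Thm.~6]{Danielsen2006}. That result (as quoted in Section~\ref{sec:intro}) gives only the direction ``isomorphic graphs $\Rightarrow$ equivalent codes.'' The full Danielsen--Parker characterization is that two self-dual additive codes over $\F_4$ are equivalent if and only if their graphs are related by graph isomorphism \emph{together with} local complementation, and local complementation can and does change degree sequences. Consequently, non-isomorphic graphs --- even ones with different degree sequences --- may yield equivalent codes, and your argument establishes nothing about inequivalence. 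To prove inequivalence one would need an LC-invariant (for example, weight enumerators or the number of minimum-weight codewords computed in MAGMA, as the paper does for the two $n=37$ codes). Since the proposition itself does not claim pairwise inequivalence, this does not undermine the proof of the stated result, but the added remark should either be dropped or replaced by a computation of an actual code invariant.
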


\begin{proposition}\label{prop93}
The bordered metacirculant graph $\overline{G}_{93}$ with
\begin{equation}\label{eq:G94}
G_{93}:= G(3, 31, 1, 
\{ 10, 12, 13, 15, 16, 18, 19, 21 \},
\{ 4, 6, 7, 9, 12, 14, 15, 18, 19, 21 \})
\end{equation}
generates a new $\left(94,2^{94}, 22\right)_4$ Type II additive self-dual code $C_{94}$. The corresponding $\dsb{94, 0, 22}_2$ code $Q_{94}$ improves the minimum distance of the $\dsb{94, 0, 20}$ code in \cite{Grassl:codetables} by $2$. The vertex set of $G_{93}$ are partitioned into $V_0,V_1,V_2$, each containing the respective $31$ vertices in $\{(k,t) : k \in \{0,1,2\}, 0 \leq t < 31\}$. Table~\ref{graph:93} presents the edge set of $G_{93}$ based on the partition. 
\end{proposition}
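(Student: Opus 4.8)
The plan is to establish, in order, that (i) $G_{93}$ is a legitimate metacirculant graph, (ii) the code $C_{94}$ is symplectic self-dual, (iii) $C_{94}$ is Type II, and (iv) its minimum distance equals $22$; steps (i)--(iii) are short bookkeeping and step (iv) carries all the weight. For (i) I would set $m=3$, $\ell=31$, $\alpha=1$, so that $\lfloor m/2\rfloor=1$, and check the four conditions of Definition~\ref{def:mc}: $1$ is a unit of $\Z_{31}$; the set $S_0=\{10,12,13,15,16,18,19,21\}$ satisfies $0\notin S_0$ and $S_0=-S_0$, since modulo $31$ the pairs summing to $0$ are $10+21$, $12+19$, $13+18$ and $15+16$; condition~(3), $\alpha^m S_1=S_1$, holds trivially because $\alpha=1$; and condition~(4) is vacuous since $m$ is odd. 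Hence $G_{93}=G(3,31,1,S_0,S_1)$ is well defined on $n=m\ell=93$ vertices, with the edge set listed in Table~\ref{graph:93}, and $\overline{G}_{93}$ is its bordered graph on $94$ vertices.

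For (ii) and (iii): $C_{94}:=C(\overline{G}_{93})$ is by definition the $\F_2$-span of the rows of $\overline{\Gamma}_{93}+\omega I_{94}$, hence a graph code, hence symplectic self-dual with $|C_{94}|=2^{94}$; by the correspondence of Schlingemann and of Danielsen--Parker~\cite{Schlingemann2002,Danielsen2006} recalled in Section~\ref{sec:intro}, this yields a $\dsb{94,0,d}_2$ stabilizer code once $d$ is pinned down, the quantum minimum distance coinciding with the $\F_4$-Hamming minimum weight of $C_{94}$ because the code is self-dual. For Type II I would apply Theorem~\ref{thm:even} directly: $m=3$ is odd, so $\Delta_S=|S_0|+1=9$ is odd, and $n=93$ is odd, whence $C_{94}$ is Type II; equivalently, by~\eqref{eq:degree} every vertex of $\overline{G}_{93}$ has odd degree, namely $\deg(v_\infty)=93$ and $\deg(v_i)=|S_0|+2\,|S_1|+1=8+20+1=29$. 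In particular every nonzero codeword of $C_{94}$ has even weight.

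Step (iv), that $d=22$, is the main obstacle. For the upper bound $d\le 22$ I would record one explicit $\F_2$-linear combination of the rows of $\overline{\Gamma}_{93}+\omega I_{94}$ of weight $22$, which a minimum-weight routine produces as a by-product. For the lower bound $d\ge 22$, the even-weight property from step~(iii) reduces the task to certifying that $C_{94}$ contains no codeword of weight $2,4,\ldots,20$; since a direct scan of $2^{94}$ codewords is infeasible, I would run a Brouwer--Zimmermann-type minimum-weight computation in {\tt MAGMA}~\cite{BOSMA1997}, working either on the additive $\F_4$-code $C_{94}$ itself or on its binary image, a $[188,94]_2$ code equipped with the induced $\F_4$-weight. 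To shorten the run I would exploit the symmetry of the construction: since $\alpha=1$, the shift maps $(i,j)\mapsto(i,j+1)$ and $(i,j)\mapsto(i+1,j)$ are automorphisms of $G_{93}$ fixing $v_\infty$ in $\overline{G}_{93}$, so an automorphism group containing $\Z_3\times\Z_{31}\cong\Z_{93}$ acts on $C_{94}$ and the enumeration of low-weight vectors can be pruned by passing to orbit representatives. The reported computation (about $27$ days) returns minimum weight $22$, so $d=22$ and $C_{94}$ has parameters $(94,2^{94},22)_4$, yielding a $\dsb{94,0,22}_2$ stabilizer code.

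Finally, comparing $\dsb{94,0,22}_2$ with the entry $\dsb{94,0,20}_2$ currently recorded in~\cite{Grassl:codetables} gives the claimed improvement of $2$ in minimum distance; if a formal inequivalence to previously known codes is desired, it follows from a mismatch in an invariant such as the weight enumerator or the order of the automorphism group. The only genuinely hard part of the whole argument is the exhaustive lower-bound certification $d\ge 22$ in step~(iv).
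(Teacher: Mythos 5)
The paper gives no written proof of Proposition~\ref{prop93}: the claim rests on verifying the definitional hypotheses, applying Theorem~\ref{thm:even} for the Type~II property, and a direct \texttt{MAGMA} minimum-distance computation (the text reports this took roughly $27$ days). Your reconstruction is correct and takes exactly the same route — checking $S_0=-S_0$, $0\notin S_0$, the (vacuous) conditions on $S_1$ and $S_{m/2}$, self-duality as a graph code, Type~II via $\Delta_S=9$ and $n=93$ both odd (equivalently $\deg(v_\infty)=93$ and $\deg(v_i)=29$ in $\overline G_{93}$, both odd), and deferring $d=22$ to the Brouwer--Zimmermann computation; your extra remark that the $\Z_{93}$-action (from $\rho$ and, since $\alpha=1$, $\sigma$) can prune the search is a sensible implementation note not made explicit in the paper but consistent with its reported $|{\rm Aut}(G_{93})|=186$.
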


\begin{table}
\caption{The $1\,302$ Edges of $G_{93}$. Vertices $1$ to $31$ relabel those in $V_0$. Vertices $32$ to $63$ stand in place of those in $V_1$. Vertices $64$ to $93$ rename the ones in $V_2$. The tuple $(r,s)$ indicates that $i \in V_r$ and $j \in V_s$ for $r,s \in \{0,1,2\}$. The presentation allows us to observe if an edge connects two vertices within the same $V_r$ or two vertices $i \in V_r$ and $j \in V_s$ with $r \neq s$.}
\label{graph:93}
\setlength{\tabcolsep}{2pt}
\renewcommand{\arraystretch}{1.1}
\centering
\resizebox{\textwidth}{!}{%
\begin{tabular}{cc l}
\toprule
$\#$ & $(r,s)$ & $(i,\{j \in J\})$ \\
\midrule
$48$ & $(0,0)$ & $(1, \{ 14, 20, 23, 29, 31 \}), 
(2, \{ 15, 21, 24, 30, 31 \}), (3, \{ 13, 19, 22, 28 \}), 
(4, \{ 17, 23, 26 \}), (5, \{ 18, 24, 27 \}), (6, \{ 16, 22, 25, 31 \}), $\\

& & $(7, \{ 20, 26, 29 \}),(8, \{ 21, 27, 30 \}),
(9, \{ 19, 25, 28 \}), (10, \{ 23, 29 \}),  
(11, \{ 24, 30 \}), (12, \{ 22, 28, 31 \}), (13, \{ 26 \}),  
(14, \{ 27 \}), $\\

&& $ (15, \{ 25, 31 \}), (16, \{ 29 \}),
(17, \{ 30 \}), (18, \{ 28 \}), (21, \{ 31 \}).$ \\ 
			
$49$ & $(1,1)$ & $(32, \{ 45, 51, 54, 60, 61, 62 \}), 
(33, \{ 43, 49, 52, 58, 62 \}), (34, \{ 47, 53, 56, 62 \}), 
(35, \{ 48, 54, 57 \}), (36, \{ 46, 52, 55, 61 \}), $\\

&& $(37, \{ 50, 56, 59 \}),(38, \{ 51, 57, 60 \}),
(39, \{ 49, 55, 58 \}), 
(40, \{ 53, 59, 62 \}), (41, \{ 54, 60 \}), 
(42, \{ 52, 58, 61 \}), (43, \{ 56, 62 \}),$\\

&& $ (44, \{ 57 \}),(45, \{ 55, 61 \}), (46, \{ 59 \}), (47, \{ 60 \}), 
(48, \{ 58 \}), (49, \{ 62 \}), (51, \{ 61 \}).$ \\
			
$48$ & $(2,2)$ & $(63, \{ 73, 79, 82, 88, 92, 93 \}), 
(64, \{ 77, 83, 86, 92 \}), (65, \{ 78, 84, 87, 93 \}), 
(66, \{ 76, 82, 85, 91 \}), (67, \{ 80, 86, 89 \}),$ \\
 
&& $(68, \{ 81, 87, 90 \}),(69, \{ 79, 85, 88 \}), 
(70, \{ 83, 89, 92 \}), (71, \{ 84, 90, 93 \}), 
(72, \{ 82, 88, 91 \}), (73, \{ 86, 92 \}), (74, \{ 87, 93 \}), $\\
&& $(75, \{ 85, 91 \}), (76, \{ 89 \}), 
(77, \{ 90 \}), (78, \{ 88 \}), (79, \{ 92 \}), 
(80, \{ 93 \}), (81, \{ 91 \}).$\\ 
			
			$385$ & $(0,1)$ & 
			$(1, \{ 33, 37, 38, 39, 40, 42, 44, 46, 47, 49, 51, 54, 55, 56, 58, 59, 60 \}), (2, \{ 32, 37, 38, 39, 40, 41, 45, 47, 48, 49, 50, 52, 56, 57, 58, 59, 60 \}),$\\
			
			&& $(3, \{ 32, 33, 37, 38, 39, 41, 42, 43, 46, 48, 50, 51, 53, 55, 57, 58, 59, 60 \}), (4, \{ 32, 34, 36, 40, 41, 42, 43, 45, 47, 49, 50, 52, 54, 57, 58, 59, 61, 62\}),$ \\
			
			&& $(5, \{ 33, 34, 35, 40, 41, 42, 43, 44, 48, 50, 51, 52, 53, 55, 59, 60, 61, 62 \}), (6, \{ 35, 36, 40, 41, 42, 44, 45, 46, 49, 51, 53, 54, 56, 58, 60, 61, 62 \}),$ \\
			
			&& $(7, \{ 35, 37, 39, 43, 44, 45, 46, 48, 50, 52, 53, 55, 57, 60, 61, 62\}), (8, \{ 36, 37, 38, 43, 44, 45, 46, 47, 51, 53, 54, 55, 56, 58, 62 \}),$ \\ 
			
			&& $(9, \{ 34, 38, 39, 43, 44, 45, 47, 48, 49, 52, 54, 56, 57, 59, 61 \}),  
			(10, \{ 32, 38, 40, 42, 46, 47, 48, 49, 51, 53, 55, 56, 58, 60 \}),$ \\ 
			
			&& $(11, \{ 33, 39, 40, 41, 46, 47, 48, 49, 50, 54, 56, 57, 58, 59, 61 \}), 
			(12, \{ 37, 41, 42, 46, 47, 48, 50, 51, 52, 55, 57, 59, 60, 62 \}),$ \\ 
			
			&& $(13, \{ 32, 35, 41, 43, 45, 49, 50, 51, 52, 54, 56, 58, 59, 61 \}), 
			(14, \{ 33, 36, 42, 43, 44, 49, 50, 51, 52, 53, 57, 59, 60, 61, 62 \}),$ \\ 
			
			&& $(15, \{ 34, 40, 44, 45, 49, 50, 51, 53, 54, 55, 58, 60, 62 \}), 
			(16, \{ 35, 38, 44, 46, 48, 52, 53, 54, 55, 57, 59, 61, 62 \}),$ \\ 
			
			&& $(17, \{ 36, 39, 45, 46, 47, 52, 53, 54, 55, 56, 60, 62 \}), 
			(18, \{ 34, 37, 43, 47, 48, 52, 53, 54, 56, 57, 58, 61 \}),$ \\
			
			&& $(19, \{ 32, 38, 41, 47, 49, 51, 55, 56, 57, 58, 60, 62 \}), 
			(20, \{ 33, 39, 42, 48, 49, 50, 55, 56, 57, 58, 59 \}), 
			(21, \{ 37, 40, 46, 50, 51, 55, 56, 57, 59, 60, 61 \}),$ \\
			
			&& $(22, \{ 35, 41, 44, 50, 52, 54, 58, 59, 60, 61 \}), 
			(23, \{ 36, 42, 45, 51, 52, 53, 58, 59, 60, 61, 62 \}), 
			(24, \{ 34, 40, 43, 49, 53, 54, 58, 59, 60, 62 \}),$ \\
			
			&& $(25, \{ 38, 44, 47, 53, 55, 57, 61, 62 \}), 
			(26, \{ 39, 45, 48, 54, 55, 56, 61, 62 \}), 
			(27, \{ 37, 43, 46, 52, 56, 57, 61, 62 \}),$ \\
			
			&& $(28, \{ 41, 47, 50, 56, 58, 60 \}), 
			(29, \{ 42, 48, 51, 57, 58, 59 \}), 
			(30, \{ 40, 46, 49, 55, 59, 60 \}), 
			(31, \{ 44, 50, 53, 59, 61 \}).$ \\ 
			
			$387$ & $(0,2)$ & 
			$(1, \{ 64, 65, 69, 75, 78, 84 \}), (2, \{ 65, 66, 67, 73, 76, 82 \}), 
			(3, \{ 64, 66, 68, 74, 77, 83 \}), (4, \{ 63, 67, 68, 72, 78, 81, 87 \}),$ \\
			
			&& $(5, \{ 63, 68, 69, 70, 76, 79, 85 \}), 
			(6, \{ 63, 67, 69, 71, 77, 80, 86 \}), 
			(7, \{ 64, 65, 66, 70, 71, 75, 81, 84, 90 \}), 
			(8, \{ 63, 64, 65, 66, 71, 72, 73, 79, 82, 88 \}),$ \\
			
			&& 
			$(9, \{ 63, 64, 65, 66, 70, 72, 74, 80, 83, 89 \}), 
			(10, \{ 63, 64, 65, 67, 68, 69, 73, 74, 78, 84, 87, 93 \}), 
			(11, \{ 65, 66, 67, 68, 69, 74, 75, 76, 82, 85, 91 \}),$ \\ 
			
			&& 
			$(12, \{ 64, 66, 67, 68, 69, 73, 75, 77, 83, 86, 92 \}), 
			(13, \{ 63, 66, 67, 68, 70, 71, 72, 76, 77, 81, 87, 90 \}),$ \\ 
			
			&& $(14, \{ 64, 68, 69, 70, 71, 72, 77, 78, 79, 85, 88 \}), 
			(15, \{ 63, 65, 67, 69, 70, 71, 72, 76, 78, 80, 86, 89 \}),$ \\ 
			
			&& $(16, \{ 64, 66, 69, 70, 71, 73, 74, 75, 79, 80, 84, 90, 93 \}), 
			(17, \{ 63, 64, 65, 67, 71, 72, 73, 74, 75, 80, 81, 82, 88, 91 \}),$ \\ 
			
			&& $(18, \{ 63, 65, 66, 68, 70, 72, 73, 74, 75, 79, 81, 83, 89, 92 \}), 
			(19, \{ 64, 65, 67, 69, 72, 73, 74, 76, 77, 78, 82, 83, 87, 93 \}),$ \\ 
			
			&& $(20, \{ 63, 65, 66, 67, 68, 70, 74, 75, 76, 77, 78, 83, 84, 85, 91 \}), 
			(21, \{ 64, 66, 68, 69, 71, 73, 75, 76, 77, 78, 82, 84, 86, 92 \}),$ \\ 
			
			&& $(22, \{ 63, 65, 67, 68, 70, 72, 75, 76, 77, 79, 80, 81, 85, 86, 90 \}), 
			(23, \{ 66, 68, 69, 70, 71, 73, 77, 78, 79, 80, 81, 86, 87, 88 \}),$ \\ 
			
			&& $(24, \{ 63, 64, 67, 69, 71, 72, 74, 76, 78, 79, 80, 81, 85, 87, 89 \}), 
			(25, \{ 63, 64, 66, 68, 70, 71, 73, 75, 78, 79, 80, 82, 83, 84, 88, 89, 93 \}),$ \\ 
			
			&& $(26, \{ 63, 64, 65, 69, 71, 72, 73, 74, 76, 80, 81, 82, 83, 84, 89, 90, 91 \}), 
			(27, \{ 63, 65, 66, 67, 70, 72, 74, 75, 77, 79, 81, 82, 83, 84, 88, 90, 92 \}),$ \\ 
			
			&& $(28, \{ 64, 65, 66, 67, 69, 71, 73, 74, 76, 78, 81, 82, 83, 85, 86, 87, 91, 92 \}), 
			(29, \{ 64, 65, 66, 67, 68, 72, 74, 75, 76, 77, 79, 83, 84, 85, 86, 87, 92, 93 \}),$ \\ 
			&& $(30, \{ 64, 65, 66, 68, 69, 70, 73, 75, 77, 78, 80, 82, 84, 85, 86, 87, 91, 93 \}), 
			(31, \{ 63, 67, 68, 69, 70, 72, 74, 76, 77, 79, 81, 84, 85, 86, 88, 89, 90 \}).$ \\ 
			
			$385$ & $(1,2)$ &
			$(32, \{ 67, 68, 69, 70, 71, 75, 77, 78, 79, 80, 82, 86, 87, 88, 89, 90 \}), 
			(33, \{ 63, 67, 68, 69, 71, 72, 73, 76, 78, 80, 81, 83, 85, 87, 88, 89, 90 \}),$ \\ 
			
			&& $(34, \{ 64, 66, 70, 71, 72, 73, 75, 77, 79, 80, 82, 84, 87, 88, 89, 91, 92, 93 \}), 
			(35, \{ 63, 64, 65, 70, 71, 72, 73, 74, 78, 80, 81, 82, 83, 85, 89, 90, 91, 92, 93 \}),$ \\
			
			&& $(36, \{ 65, 66, 70, 71, 72, 74, 75, 76, 79, 81, 83, 84, 86, 88, 90, 91, 92, 93 \}), 
			(37, \{ 65, 67, 69, 73, 74, 75, 76, 78, 80, 82, 83, 85, 87, 90, 91, 92 \}),$ \\ 
			
			&& $(38, \{ 66, 67, 68, 73, 74, 75, 76, 77, 81, 83, 84, 85, 86, 88, 92, 93 \}), (39, \{ 64, 68, 69, 73, 74, 75, 77, 78, 79, 82, 84, 86, 87, 89, 91, 93 \}),$ \\ 
			
			&& $(40, \{ 68, 70, 72, 76, 77, 78, 79, 81, 83, 85, 86, 88, 90, 93 \}), 
			(41, \{ 63, 69, 70, 71, 76, 77, 78, 79, 80, 84, 86, 87, 88, 89, 91 \}),$ \\ 
						
			&& $(42, \{ 67, 71, 72, 76, 77, 78, 80, 81, 82, 85, 87, 89, 90, 92 \}), 
			(43, \{ 65, 71, 73, 75, 79, 80, 81, 82, 84, 86, 88, 89, 91, 93 \}),$ \\ 
			
			&& $(44, \{ 63, 66, 72, 73, 74, 79, 80, 81, 82, 83, 87, 89, 90, 91, 92 \}), 
			(45, \{ 64, 70, 74, 75, 79, 80, 81, 83, 84, 85, 88, 90, 92, 93 \}),$ \\ 
			
			&& $(46, \{ 65, 68, 74, 76, 78, 82, 83, 84, 85, 87, 89, 91, 92 \}), 
			(47, \{ 66, 69, 75, 76, 77, 82, 83, 84, 85, 86, 90, 92, 93 \}),$ \\ 
			
			&& $(48, \{ 64, 67, 73, 77, 78, 82, 83, 84, 86, 87, 88, 91, 93 \}), 
			(49, \{ 68, 71, 77, 79, 81, 85, 86, 87, 88, 90, 92 \}),$ \\ 
			
			&& $(50, \{ 63, 69, 72, 78, 79, 80, 85, 86, 87, 88, 89, 93 \}), 
			(51, \{ 67, 70, 76, 80, 81, 85, 86, 87, 89, 90, 91 \}), 
			(52, \{ 65, 71, 74, 80, 82, 84, 88, 89, 90, 91, 93 \}),$ \\ 
			
			&& $(53, \{ 66, 72, 75, 81, 82, 83, 88, 89, 90, 91, 92 \}), 
			(54, \{ 64, 70, 73, 79, 83, 84, 88, 89, 90, 92, 93 \}), 
			(55, \{ 68, 74, 77, 83, 85, 87, 91, 92, 93 \}),$ \\ 
			
			&& $(56, \{ 69, 75, 78, 84, 85, 86, 91, 92, 93 \}), 
			(57, \{ 67, 73, 76, 82, 86, 87, 91, 92, 93 \}), 
			(58, \{ 71, 77, 80, 86, 88, 90 \}),$ \\ 
			
			&& $(59, \{ 72, 78, 81, 87, 88, 89 \}), 
			(60, \{ 70, 76, 79, 85, 89, 90 \}), 
			(61, \{ 74, 80, 83, 89, 91, 93 \}), 
			(62, \{ 75, 81, 84, 90, 91, 92 \}).$ \\ 
			\bottomrule 
		\end{tabular} 
	}
\end{table}

Table~\ref{table:property} gives the properties of the graphs. All have diameter $2$ and girth $3$, except for $G_{36,1}$ whose diameter is $3$. Listed are the minimum distance $d_{\rm min}(G)$ of $C:=C(G)$, the valency $\nu(G)$, the maximum clique size $\gamma(G)$, and the size $|{\rm Aut}(G)|$ of the automorphism group.

\begin{table}
\caption{Properties of the Graphs}
\label{table:property}
\setlength{\tabcolsep}{5pt}
\centering
\begin{tabular}{l c c c c | l c c c c }
\toprule
Graph & $d_{\rm min}(G)$ & $\nu(G)$ & $\gamma(G)$ & $|{\rm Aut}(G)|$ &
Graph & $d_{\rm min}(G)$ & $\nu(G)$ & $\gamma(G)$ & $|{\rm Aut}(G)|$ \\
\midrule
$G_{28}$ & $10$ & $11$ & $4$ & $28$ &
$G_{80,1}$ & $20$ & $41$ & $8$ & $80$ \\

$G_{36,1}$ & $11$ & $11$ & $4$ & $36$ &
$G_{80,2}$ & $20$ & $44$ & $7$ & $80$ \\

$G_{36,2}$ & $11$ & $19$ & $5$ & $36$ &
$G_{80,3}$ & $20$ & $35$ & $9$ & $80$ \\

&&&&&	
$G_{93}$ & $22$ & $28$ & $4$ & $186$ \\
\bottomrule 
\end{tabular} 
\end{table}

\section*{Acknowledgements}
Nanyang Technological University Grant Number 04INS000047C230GRT01 supports the research carried out by M. F. Ezerman.

\bibliographystyle{plain}

\end{document}